\newcommand{\Endproof}{\hfill$\Box$\\}
\begin{document}

\title{Quantum Algorithms for the Most Frequently String Search, Intersection of Two String Sequences and Sorting of Strings Problems}
\author{Kamil~Khadiev$^{1,2}$ ORCID: 0000-0002-5151-9908\and Artem Ilikaev$^{2}$} 

\institute{Smart Quantum Technologies Ltd., Kazan, Russia\and Kazan Federal University, Kazan, Russia \\ \email{kamil.hadiev@kpfu.ru, artemka.tema1998@gmail.com } 
}

\maketitle

\begin{abstract}
We study algorithms for solving three problems on strings. The first one is the Most Frequently String Search Problem. The problem is the following. Assume that we have a sequence of $n$ strings of length $k$. The problem is finding the string that occurs in the sequence most often. 
We propose a quantum algorithm that has a query complexity $\tilde{O}(n \sqrt{k})$. This algorithm shows speed-up comparing with the deterministic algorithm that requires $\Omega(nk)$ queries.

The second one is searching intersection of two sequences of strings. All strings have the same length $k$. The size of the first set is $n$ and the size of the second set is $m$.  We propose a quantum algorithm that has a query complexity $\tilde{O}((n+m) \sqrt{k})$. This algorithm shows speed-up comparing with the deterministic algorithm that requires $\Omega((n+m)k)$ queries.

The third problem is sorting of $n$ strings of length $k$. On the one hand, it is known that quantum algorithms cannot sort objects asymptotically faster than classical ones. On the other hand, we focus on sorting strings that are not arbitrary objects.  We propose a quantum algorithm that has a query complexity $O(n (\log n)^2 \sqrt{k})$. This algorithm shows speed-up comparing with the deterministic algorithm (radix sort) that requires $\Omega((n+d)k)$ queries, where $d$ is a size of the alphabet.

\textbf{Keywords:} quantum computation, quantum models, quantum algorithm, query model, string search, sorting \end{abstract}

\section{Introduction}
\emph{Quantum computing} \cite{nc2010,a2017} is one of the hot topics in computer science of last decades.
There are many problems where quantum algorithms outperform the best known classical algorithms \cite{dw2001,quantumzoo,ks2019,kks2019}.

One of these problems are problems for strings. Researchers show the power of quantum algorithms for such problems in  \cite{m2017,bbbv1997,rv2003}.

In this paper, we consider three problems:
\begin{itemize}
    \item the Most Frequently String Search problem;
    \item Strings sorting problem;
    \item Intersection of Two String Sequences problem.
\end{itemize}

 Our algorithms use some quantum algorithms as a subroutine, and the rest part is classical. We investigate the problems in terms of query complexity. The query model is one of the most popular in the case of quantum algorithms. Such algorithms can do a query to a black box that has access to the sequence of strings. As a running time of an algorithm, we mean a number of queries to the black box.

The first problem is the following. We have $n$ strings of length $k$. We can assume that symbols of strings are letters from any finite alphabet, for example,  binary, Latin alphabet or Unicode. The problem is finding the string that occurs in the sequence most often.
The problem \cite{ch2008} is one of the most well-studied ones in the area of data streams \cite{m2005,a2007datastreams,bcg2011,blm2015}. Many
applications in packet routing, telecommunication logging and tracking keyword queries in search machines are critically
based upon such routines.
The best known deterministic algorithms require $\Omega(nk)$ queries because an algorithm should at least test all symbols of all strings. The deterministic solution can use the Trie (prefix tree) \cite{d59,b98,b2008,knuth73} that allows to achieve the required complexity.

We propose a quantum algorithm that uses a self-balancing binary search tree for storing strings and a quantum algorithm for comparing strings. As a self-balancing binary search tree we can use the AVL tree \cite{avl62,cormen2001} or the Red-Black tree \cite{g78,cormen2001}. As a string comparing algorithm, we propose an algorithm that is based on the first one search problem algorithm from \cite{k2014,ll2015,ll2016}. This algorithm is a modification of Grover's search algorithm \cite{g96,bbht98}. Another important algorithm for search is described in \cite{lg2001}. Our algorithm for the most frequently string search problem has query complexity $O(n(\log n)^2 \cdot \sqrt{k})=\tilde{O}(n \sqrt{k})$, where $\tilde{O}$ does not consider a log factors. If $\log_2 n=o(k^{0.25})$, then our algorithm is better than deterministic one. Note, that this setup makes sense in practical cases.

The second problem is String Sorting problem. Assume that we have $n$ strings of length $k$. It is known \cite{hns2001,hns2002} that no quantum algorithm can sort arbitrary comparable objects faster than $O(n\log n)$. At the same time, several researchers tried to improve the hidden constant \cite{oeaa2013,oa2016}. Other researchers investigated space bounded case \cite{k2003sort}.  We focus on sorting strings. In a classical case, we can use an algorithm that is better than arbitrary comparable objects sorting algorithms. It is radix sort that has $O((n+d)k)$ query complexity \cite{cormen2001}, where $d$ is a size of the alphabet. Our quantum algorithm for the string sorting problem has query complexity $O(n(\log n)^2 \cdot \sqrt{k})=\tilde{O}(n \sqrt{k})$. It is based on standard sorting algorithms like Merge sort \cite{cormen2001} or Heapsort \cite{w1964,cormen2001} and the quantum algorithm for comparing strings. 

The third problem is the Intersection of Two String Sequences problem. Assume that we have two sequences of strings of length $k$. The size of the first set is $n$ and the size of the second one is $m$. The first sequence is given and the second one is given in online fashion, one by one. After each requested string from the second sequence, we want to check weather this string belongs to the first sequence.  We propose two quantum algorithms for the problem. Both algorithms has query complexity $O((n+m)\cdot\log n  \cdot\log(n+m) \sqrt{k})=\tilde{O}((n+m) \sqrt{k})$. The first algorithm uses a self-balancing binary search tree like the solution of the first problem. The second algorithm uses a quantum algorithm for sorting strings and has better big-$O$ hidden constant. At the same time, the best known deterministic algorithm requires $O((n+m)k)$ queries.

The structure of the paper is the following. We present the quantum subroutine that compares two strings in Section \ref{sec:compare}. Then we discussed three problems: the Most Frequently String Search problem in Section \ref{sec:freq}, Strings Sorting problem in Section \ref{sec:sort} and Intersection of Two String Sequences problem in Section \ref{sec:sets}. 

\section{The Quantum Algorithm for Two Strings Comparing}\label{sec:compare}
Firstly, we discuss a quantum subroutine that compares two strings of length $k$. Assume that this subroutine is $\textsc{Compare\_strings}(s,t,k)$ and it compares $s$ and $t$ in lexicographical order. It returns:
\begin{itemize}
    \item $-1$ if $s<t$;
    \item $0$ if $s=t$;
    \item $1$ if $s>t$;
\end{itemize}

As a base for our algorithm, we will use the algorithm of finding the minimal argument with $1$-result of a Boolean-value function. Formally, we have:
\begin{lemma}\cite{k2014,ll2015,ll2016}\label{lm:first-one}
Suppose, we have a function $f:\{1,\dots,N\}\to \{0,1\}$ for some integer $N$. There is a quantum algorithm for finding $j_0=\min\{j\in\{1,\dots,N\}:f(j)=1\}$. The algorithm finds $j_0$ with expected query complexity $O(\sqrt{j_0})$ and error probability that is at most $\frac{1}{2}$.
\end{lemma}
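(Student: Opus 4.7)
The plan is to combine exponential-search expansion with Grover-style amplitude amplification, followed by a quantum minimum-finding refinement. In the expansion phase, for $i = 0, 1, 2, \ldots$, I would run the unknown-count version of Grover's search (BBHT) on the restricted domain $\{1, \ldots, \min(2^i, N)\}$ to detect any marked index, stopping at the first $i$ that returns some $j$ with $f(j) = 1$. Each round costs $O(2^{i/2})$ queries, and by round $i^* = \lceil \log_2 j_0 \rceil$ the window already contains $j_0$, so the phase terminates with constant success probability after an expected $O(1)$ additional rounds; the geometric sum gives total cost $\sum_{i=0}^{i^*} O(2^{i/2}) = O(\sqrt{j_0})$.

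Once a marked index $j = O(j_0)$ is in hand, I would invoke the quantum minimum-finding procedure on the prefix $\{1, \ldots, j\}$ with respect to the function $g(\ell) = \ell$ on marked indices and $g(\ell) = +\infty$ on unmarked ones. This subroutine returns the smallest marked index, namely $j_0$, with constant success probability using $O(\sqrt{j}) = O(\sqrt{j_0})$ queries. Combining the two phases, the overall expected query complexity is $O(\sqrt{j_0})$ as claimed.

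The main obstacle is keeping the cumulative error at most $1/2$ without inflating this complexity, since each Grover-type subroutine only has constant-bounded success probability and the algorithm makes $O(\log j_0)$ such calls. I would address it by amplifying the $r$-th call in the sequence to error $\varepsilon_r = 2^{-r-c}$ for a suitable constant $c$: each amplification adds only a factor $O(r)$ in queries, but since the expansion-round sizes grow geometrically up to $\sqrt{j_0}$ and the refinement cost admits a geometric bound of the same order, the amplified total still sums to $O(\sqrt{j_0})$. A union bound then gives overall error $\sum_r \varepsilon_r \le 1/2$, completing the argument.
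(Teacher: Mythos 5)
First, note that the paper does not actually prove this lemma; it imports it as a black box from \cite{k2014,ll2015,ll2016}, so your proposal can only be compared with the standard arguments in those references. Your two-phase construction --- exponential doubling of the search window with BBHT-style Grover search to locate some marked index $j\ge j_0$ at cost $O(\sqrt{j_0})$, then quantum minimum finding on the prefix $\{1,\dots,j\}$ to extract the smallest marked index --- is essentially that standard route, and your first two paragraphs are sound modulo the usual technicalities (capping each BBHT run so that it has constant per-round success probability, verifying the returned index with one extra query, and deciding when to declare that no marked element exists).

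The genuine flaw is in your error-management paragraph. Amplifying the $r$-th call to error $2^{-r-c}$ multiplies its cost by $\Theta(r)$, and since the round costs grow geometrically like $2^{r/2}$ up to $r\approx\log_2 j_0$, the resulting sum $\sum_r r\,2^{r/2}$ is dominated by its last term and is $\Theta(\sqrt{j_0}\log j_0)$, not $O(\sqrt{j_0})$: one cannot absorb an extra factor of $r$ into a geometric sum whose largest term already attains the target bound, so your claim that ``the amplified total still sums to $O(\sqrt{j_0})$'' is false. Fortunately, the amplification is unnecessary. A failed expansion round does not corrupt the output at all --- every marked element found in any window is $\ge j_0$, and every window containing a marked element contains $j_0$ --- it merely forces one more doubling; since round $i^*+r$ is reached with probability at most $p^r$ for a constant $p<1/\sqrt{2}$ and costs $O(2^{(i^*+r)/2})$, the expected cost stays $O(\sqrt{j_0})$. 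The only call whose failure yields a wrong answer is the final minimum-finding call, which already has constant error probability, so the overall error is at most $1/2$ with no per-round boosting. Replacing your last paragraph with this observation repairs the proof.
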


Let us choose the function $f(j)=(s_j\neq t_j)$. So, we search $j_0$ that is the index of the first unequal symbol of the strings. Then, we can claim that $s$ precedes $t$ in lexicographical order iff $s_{j_0}$ precedes $t_{j_0}$ in alphabet $\Sigma$. The claim is right by the definition of  lexicographical order. If there are no unequal symbols, then the strings are equal.

We use the standard technique of boosting success probability. So, we repeat the algorithm $3\log_2 n$ times and return the minimal answer, where $n$ is a number of strings in the sequence $s$. In that case, the error probability is $O\left(\frac{1}{2^{3\log n}}\right)=\left(\frac{1}{n^3}\right)$, because if we have an error in whole algorithm it means no invocation finds minimal index of unequal symbol.

Let us present the algorithm. We use $\textsc{The\_first\_one\_search}(f,k)$ as a subroutine from Lemma \ref{lm:first-one}, where $f(j)=(s_j\neq t_j)$. Assume that this subroutine returns $k+1$ if it does not find any solution.
\begin{algorithm}
\caption{$\textsc{Compare\_strings}(s,t,k)$. The Quantum Algorithm for Two Strings Comparing.}\label{alg:strcmp}
\begin{algorithmic}
\State $j_0 \gets  \textsc{The\_first\_one\_search}(f,k)$\Comment{The initial value}
\For{$i \in \{1,\dots,3\log_2 n\}$}
\State $j_0 \gets  \min(j_0, \textsc{The\_first\_one\_search}(f,k))$
\EndFor
\If{$j_0=k+1$} 
\State $result \gets 0$\Comment{The strings are equal.}
\EndIf
\If{$(j_0\neq k+1 )\& (s_{j_0}<t_{j_0})$}
\State $result \gets -1$ \Comment{$s$ precedes $t$.}
\EndIf
\If{$(j_0\neq k+1 )\& (s_{j_0}>t_{j_0})$}
\State $result \gets 1$ \Comment{$s$ succeeds $t$.}
\EndIf
\State \Return $result$
\end{algorithmic}
\end{algorithm}

The next property follows from the previous discussion.
\begin{lemma}\label{lm:strcmp}
Algorithm \ref{alg:strcmp} compares two strings of length $k$ in lexicographical order with  query complexity $O(\sqrt{k}\log n)$ and error probability $O\left(\frac{1}{n^3}\right)$.
\end{lemma}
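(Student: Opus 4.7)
The plan is to analyze Algorithm~\ref{alg:strcmp} by establishing three things in order: (i) the value of $j_0$ obtained as the minimum over the $\Theta(\log n)$ independent invocations of $\textsc{The\_first\_one\_search}(f,k)$ correctly identifies the first index where $s$ and $t$ differ, with high probability; (ii) the post-processing by three conditional branches yields the correct lexicographic verdict; and (iii) the total query complexity satisfies the claimed bound.

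For correctness, let $j_0^\star = \min\{j \in \{1,\dots,k\} : s_j \neq t_j\}$, with the convention $j_0^\star = k+1$ when $s = t$. By Lemma~\ref{lm:first-one} applied to $f(j) = (s_j \neq t_j)$, a single invocation of $\textsc{The\_first\_one\_search}(f,k)$ returns $j_0^\star$ with probability at least $1/2$, and otherwise (as is standard for amplitude-amplification-based searches that verify their output with one additional query to $f$) we may assume that a failed invocation reports $k+1$. The algorithm performs $1 + 3\log_2 n$ independent invocations and stores the running minimum, so it returns $j_0^\star$ unless every invocation fails; by independence this happens with probability at most $(1/2)^{3\log_2 n + 1} = O(1/n^3)$. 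Conditional on $j_0 = j_0^\star$, the trailing \textbf{if} statements inspect at most one symbol of $s$ and one of $t$ and return $-1$, $0$, or $1$ exactly as required by the definition of lexicographic order.

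For the query complexity, each invocation of $\textsc{The\_first\_one\_search}(f,k)$ has expected complexity $O(\sqrt{j_0^\star}) = O(\sqrt{k})$ by Lemma~\ref{lm:first-one}. Summing over $1 + 3\log_2 n = O(\log n)$ invocations yields $O(\sqrt{k}\log n)$ queries, and the constantly many symbol reads used by the final conditional branches contribute only $O(1)$.

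The main delicate point is step~(i): one must rule out the possibility that a ``failed'' invocation returns a spuriously small index, which would poison the running minimum and make the overall answer wrong even though some other invocation succeeded. This is handled by the standard observation that the output of the first-one-search routine is cheaply checkable against $f$, so without loss of generality a failure is recorded as $k+1$; under this convention the minimum operation is monotone and only strictly decreases toward $j_0^\star$, making the amplification argument go through.
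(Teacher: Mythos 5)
Your proposal is correct and follows essentially the same route as the paper: apply Lemma~\ref{lm:first-one} to $f(j)=(s_j\neq t_j)$, boost by taking the minimum over $O(\log n)$ independent runs so the failure probability drops to $(1/2)^{3\log_2 n}=O(1/n^3)$, and sum the per-run cost $O(\sqrt{j_0^\star})=O(\sqrt{k})$ to get $O(\sqrt{k}\log n)$. In fact you are more explicit than the paper (whose proof is deferred to the preceding discussion) on the one delicate point --- that a failed invocation can never return an index smaller than the true first mismatch, so the running minimum cannot be poisoned --- which is exactly the observation the paper relies on implicitly.
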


\section{The Most Frequently String Search Problem}\label{sec:freq}
Let us formally present the problem.

{\bf Problem.}
For some positive integers $n$ and $k$, we have the sequence of strings $s=(s^1,\dots,s^n)$. Each $s^i=(s^i_1,\dots,s^i_k)\in \Sigma^k$ for some finite size alphabet $\Sigma$. Let $\#(s)=|\{i\in\{1,\dots,m\}: s^i=s\}|$ be a number of occurrences of string $s$. We search $s=argmax_{s^i\in S}\#(s^i)$.  

\subsection{The Quantum algorithm}

Firstly, we present an idea of the algorithm.

We use the well-known data structure a self-balancing binary search tree. As an implementation of the data structure, we can use the AVL tree \cite{avl62,cormen2001} or the Red-Black tree \cite{g78,cormen2001}. Both data structures allow as to find and add elements in $O(\log N)$ running time, where $N$ is a size of the tree.

The idea of the algorithm is the following. We store pairs $(i,c)$ in vertexes of the tree, where $i$ is an index of a string from $s$ and $c$ is a number of occurrences of the string $s^i$. We assume that a pair $(i,c)$ is less than a pair $(i',c')$ iff $s^i$ precedes $s^{i'}$ in the lexicographical order. So, we use $\textsc{Compare\_strings}(s^{i},s^{i'},k)$ subroutine as the compactor of the vertexes. The tree represents a set of unique strings from $(s^1,\dots,s^n)$ with a number of occurrences.

We consider all strings from $s^1$ to $s^n$ and check the existence of a string in our tree. If a string exists, then we increase the number of occurrences. If the string does not exist in the tree, then we add it. At the same time, we store $(i_{max},c_{max})=argmax_{(i,c)\mbox{ in the tree}}c$ and recalculate it in each step.

Let us present the algorithm formally. Let $BST$ be  a self-balancing binary search tree such that:
\begin{itemize}
    \item $\textsc{Find}(BST, s^i)$ finds vertex $(i,c)$ or returns $NULL$ if such vertex does not exist;
    \item $\textsc{Add}(BST, s^i)$ adds vertex $(i,0)$ to the tree and returns the vertex as a result;
     \item $\textsc{Init}(BST)$ initializes an empty tree;
\end{itemize}

\begin{algorithm}
\caption{The Quantum Algorithm for Most Frequently String Problem.}\label{alg:qmain}
\begin{algorithmic}
\State $\textsc{Init}(BST)$\Comment{The initialization of the tree.}
\State $c_{max}\gets 1$\Comment{The maximal number of occurrences.}
\State $i_{max}\gets 1$\Comment{The index of most requently string.}
\For{$i \in \{1,\dots,n\}$}
\State $v=(i,c) \gets \textsc{Find}(BST, s^i)$\Comment{Searching $s^i$ in the tree.}
\If{$v=NULL$}
\State $v=(i,c) \gets \textsc{Add}(BST, s^i)$\Comment{If there is no $s^i$, then we add it.}
\EndIf
\State $c\gets c+1$\Comment{Updating the vertex by increasing the number of occurrences.}
\If{$c>c_{max}$}\Comment{Updating the maximal value.}
\State $c_{max} \gets c$
\State $i_{max} \gets i$

\EndIf
\EndFor

\State \Return $s^{i_{max}}$
\end{algorithmic}
\end{algorithm}

Let us discuss the property of the algorithm.
\begin{theorem}\label{th:qfreq-compl}
Algorithm \ref{alg:qmain} finds the most frequently string from $s=(s^1,\dots,s^n)$ with query complexity $O(n(\log n)^2\cdot\sqrt{k})$ and error probability $O\left(\frac{1}{n}\right)$.
\end{theorem}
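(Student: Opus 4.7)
The plan is to separate the argument into three parts: correctness of the algorithm assuming all string comparisons return the right answer, the query-complexity count, and a union bound on the failure probability.

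For correctness, I would argue by induction on the iteration $i$ that after the $i$-th step the BST contains exactly one node for each distinct string occurring among $s^1,\dots,s^i$, with counter equal to the number of its occurrences in that prefix, and that $(i_{max},c_{max})$ records an argmax. The inductive step is immediate from the specification of $\textsc{Find}$ and $\textsc{Add}$: if $s^i$ already appears in the tree, $\textsc{Find}$ returns the corresponding node and we bump its counter; otherwise we insert a new node with counter initialized so that after the increment it equals $1$. The $(i_{max},c_{max})$ update then obviously preserves the argmax invariant. This uses only the fact that $\textsc{Compare\_strings}$ acts as a correct lexicographic comparator, so that the BST keys are totally ordered consistently with equality of strings.

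For the query complexity, the only queries are issued inside calls to $\textsc{Compare\_strings}$. The tree has size at most $n$, so each $\textsc{Find}$ or $\textsc{Add}$ performs $O(\log n)$ comparisons on a self-balancing BST (AVL or Red-Black). Each such comparison costs $O(\sqrt{k}\log n)$ queries by Lemma \ref{lm:strcmp}. With $n$ outer iterations, the total is $O(n\cdot \log n \cdot \sqrt{k}\log n) = O(n(\log n)^2\sqrt{k})$, matching the claim.

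For error analysis, observe that the total number of invocations of $\textsc{Compare\_strings}$ is $O(n\log n)$, each failing with probability $O(1/n^3)$ by Lemma \ref{lm:strcmp}. A union bound gives overall failure probability $O(n\log n / n^3) = O(\log n / n^2) = O(1/n)$. Conditioned on the complementary event (every comparison is correct), the correctness argument above applies verbatim. The only subtlety, which is what I would treat most carefully, is that an erroneous comparison could corrupt the BST in a way that later steps can no longer repair; but since we are bounding the probability that \emph{any} comparison errs and conditioning on none of them doing so, this subtlety is absorbed by the union bound without a more delicate martingale-style argument.
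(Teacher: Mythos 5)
Your proposal is correct and follows essentially the same route as the paper's proof: count $O(n\log n)$ invocations of $\textsc{Compare\_strings}$, apply Lemma \ref{lm:strcmp} for the $O(\sqrt{k}\log n)$ cost per comparison, and bound the total failure probability by aggregating the $O(1/n^3)$ per-comparison error over all comparisons (you via a union bound, the paper via the equivalent $1-(1-1/n^3)^{\alpha n\log n}$ computation). Your explicit induction for correctness and your remark that conditioning on no comparison erring absorbs any concern about BST corruption are more careful than the paper's one-line ``follows from the description,'' but they do not constitute a different approach.
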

\begin{proof}
The correctness of the algorithm follows from the description.
Let us discuss the query complexity. Each operation $\textsc{Find}(BST, s^i)$ and $\textsc{Add}(BST, s^i)$ requires $O(\log n)$ comparing operations $\textsc{Compare\_strings}(s^{i},s^{i'},k)$. These operations are invoked $n$ times. Therefore, we have $O(n\log n)$ comparing operations. Due to Lemma \ref{lm:strcmp}, each comparing operation requires $O(\sqrt{k}\log n)$ queries. The total query complexity is  $O(n\sqrt{k}(\log n)^2)$.

Let us discuss the error probability. Events of error in the algorithm are independent. So, all events should be correct. Due to  Lemma \ref{lm:strcmp}, the probability of correctness of one event is $1-\left(1-\frac{1}{n^3}\right)$. Hence, the probability of correctness of all $O(n\log n)$ events is at least $ 1-\left(1-\frac{1}{n^3}\right)^{\alpha\cdot n\log n}$ for some constant $\alpha$.

Note that 
\[ \lim\limits_{n\to \infty} \frac{1-\left(1-\frac{1}{n^3}\right)^{\alpha\cdot n\log n}}{1/n}<1;\]
 Hence, the total error probability is at most $O\left(\frac{1}{n}\right)$.

\Endproof
\end{proof}

The data structure that we use can be considered as a separated data structure. We call it {\em ``Multi-set of strings with quantum comparator''}. Using this data structure, we can implement 
\begin{itemize}
    \item {\em ``Set of strings with quantum comparator''} if always $c=1$ in pair $(i,c)$ of a vertex;
    \item {\em ``Map with string key and quantum comparator''} if we replace $c$ by any data $r\in \Gamma$ for any set $\Gamma$. In that case, the data structure implements mapping $\Sigma^k\to\Gamma$.
\end{itemize}

All of these data structures has $O((\log n)^2 \sqrt{k})$  complexity of basic operations (\textsc{Find}, \textsc{Add}, \textsc{Delete}).

\subsection{On the Classical Complexity of the Problem}
The best known classical algorithm stores string to Trie (prefix tree) \cite{d59,b98}, \cite{b2008,knuth73} and do the similar operations. The running time of such algorithm is $O(nk)$. At the same time, we can show that if an algorithm tests$o(nk)$ variables, then it can return a wrong answer.

\begin{theorem}\label{th:dfreq-compl}
Any deterministic algorithm for the Most Frequently String Search problem has $\Omega(nk)$ query complexity.
\end{theorem}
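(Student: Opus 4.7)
The plan is to use a classical adversary argument. The adversary answers every query $s^i_j$ with the symbol $0$ of the alphabet. After the algorithm has made its $Q$ queries and outputs some candidate string $w^*$, the adversary will exhibit a completion of the input consistent with its answers on which $w^*$ is \emph{not} the most frequent string. Let $Q_i$ denote the number of queries directed at string $s^i$, so $\sum_{i=1}^n Q_i = Q$, and let $F$ be the number of fully queried strings (those with $Q_i = k$); note $F \leq Q/k$. Every fully queried string is pinned to $0^k$, while every other string retains freedom at its unqueried positions.

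I would then split on the algorithm's output. If $w^* \neq 0^k$, the adversary completes each partially queried string $s^i$ by setting one of its unqueried positions to a symbol differing from $w^*$ at that position; fully queried strings already equal $0^k \neq w^*$ automatically. This produces a consistent instance in which $s^i \neq w^*$ for every $i$, so $w^*$ does not even occur in the input and cannot be the most frequent. If $w^* = 0^k$, the adversary instead constructs a completion in which some string $u \neq 0^k$ appears $F+1$ times while $0^k$ appears only $F$ times: it selects $F+1$ partially queried strings that share a common unqueried position $j^*$, sets these strings all equal to the value $u$ that is $0$ at every position queried in any of them and nonzero at $j^*$, and completes the remaining partially queried strings to pairwise distinct values different from $0^k$ and $u$. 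Then $u$ is strictly more frequent than $0^k$, and the algorithm's output $w^* = 0^k$ is wrong.

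The main step, and the main obstacle, is to show that in the second case $F+1$ partially queried strings sharing a common unqueried position always exist whenever $Q$ is strictly below an $\Omega(nk)$ threshold. I plan to pick the $F+1$ partially queried strings with the smallest values of $Q_i$, upper-bound the union of their queried positions by the sum of those counts, and apply pigeonhole to produce the common unqueried position $j^*$. Combined with the direct inequality $F \leq Q/k$, balancing this trade-off over all possible values of $F$ yields the $\Omega(nk)$ lower bound claimed in the theorem. The delicate point will be to carry out the counting carefully enough to obtain the full $\Omega(nk)$ bound rather than a weaker one, which is where the bulk of the technical work lies.
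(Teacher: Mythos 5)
Your overall strategy (a constant-answer adversary plus a case split on the output) is workable, but the key counting step as you describe it does not go through, and this is precisely the step you identify as carrying the technical weight. You propose to find $F+1$ partially queried strings with a common unqueried position by taking the $F+1$ partially queried strings with smallest $Q_i$ and bounding the union of their queried positions by $\sum Q_i < k$. That bound is vacuous as soon as the sum reaches $k$, and this happens well inside the query budget: with $Q = cnk$ the algorithm can fully query $cn/2$ strings (so $F = cn/2$) and spread the remaining $cnk/2$ queries evenly, giving every partially queried string about $\tfrac{ck}{2-c}$ queries; then the $F+1$ smallest counts sum to $\Theta(c^2 nk) \gg k$, and your pigeonhole yields nothing. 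The needed combinatorial fact is still true, but it requires a different count: the partially queried strings contain at least $nk - Q \geq (1-c)nk$ unqueried cells in total, so some single column $j^*$ is unqueried in at least $(1-c)n$ partially queried strings, which exceeds $F+1 \leq cn+1$ once $c < 1/2$ and $n$ is large. Without replacing your union bound by this column-counting argument (or something equivalent), Case~2 of your proof is not established. A smaller point: in Case~2 you cannot always complete the remaining partially queried strings to values distinct from both $0^k$ and $u$ (over a binary alphabet a string with one unqueried position has only two completions); you only need them to differ from $0^k$, which is always achievable.

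For comparison, the paper avoids this counting entirely by using a two-symbol, two-block adversary: answer $a$ on strings $i \leq n/2$ and $b$ on strings $i > n/2$, note that $o(nk)$ queries leave an unqueried cell in each block, and then, depending on whether the output equals the all-$a$ string $s'$, either corrupt one cell in the first block and solidify the second block into $n/2$ copies of the all-$b$ string, or vice versa. This makes both cases symmetric one-line arguments and sidesteps the need for any lemma about many strings sharing a common unqueried position. Your single-symbol adversary forces the harder case in which the output $0^k$ genuinely occurs $F$ times and a more frequent competitor must be manufactured, which is where your argument currently breaks.
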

\begin{proof}
Suppose, we have a deterministic algorithm $A$ for the Most Frequently String Search problem that uses $o(nk)$ queries. 

Let us consider an adversary that suggest an input. The adversary wants to construct an input such that the algorithm $A$ obtains a wrong answer.  

Without loss of generality, we can say that $n$ is even. Suppose, $a$ and $b$ are different symbols from an input alphabet. If the algorithm requests an variable $s^i_j$ for $i\leq n/2$, then the adversary returns $a$. If the algorithm requests an variable $s^i_j$ for $i>n/2$, then the adversary returns $b$. 

Because of the algorithm $A$ uses $o(nk)$ queries, there are at least one $s^{z'}_{j'}$ and one  $s^{z''}_{j''}$ that are not requested, where $z'\leq n/2$, $z''>n/2$ and $j',j''\in\{1,\dots,k\}$.

Let $s'$ be a string such that $s'_j=a$ for all $j\in\{1,\dots,k\}$. Let $s''$ be a string such that $s''_j=b$ for all $j\in\{1,\dots,k\}$.

Assume that $A$ returns $s'$. Then, the adversary assigns $s^{z'}_{j'}=b$ and assigns $s^i_j=b$ for each $i>n/2, j\in\{1,\dots,k\}$. Therefore, the right answer should be $s''$.

Assume that $A$ returns a string $s\neq s'$. Then, the adversary assigns $s^{z''}_{j''}=a$ and assigns $s^i_j=a$ for each $i\leq n/2, j\in\{1,\dots,k\}$. Therefore, the right answer should be $s'$.

So, the adversary can construct the input such that $A$ obtains a wrong answer.
\Endproof
\end{proof}
\section{Strings Sorting Problem}\label{sec:sort}
Let us consider the following problem.

{\bf Problem.}
For some positive integers $n$ and $k$, we have the sequence of strings $s=(s^1,\dots,s^n)$. Each $s^i=(s^i_1,\dots,s^i_k)\in \Sigma^k$ for some finite size alphabet $\Sigma$. We search order $ORDER=(i_1,\dots, i_n)$ such that for any $j\in\{1,\dots,n-1\}$ we have $s^{i_j}\leq s^{i_{j+1}}$ in lexicographical order. 

We use Heap sort algorithm \cite{w1964,cormen2001} as a base and Quantum algorithm for comparing string from Section \ref{sec:compare}. We can replace Heap sort algorithm by any other sorting algorithm, for example, Merge sort \cite{cormen2001}. In a case of Merge sort, the big-O hidden constant in query complexity will be smaller. At the same time, we need more additional memory.

Let us present Heap sort for completeness of the explanation.
We can use Binary Heap \cite{w1964}. We store indexes of strings in vertexes. As in the previous section, if we compare vertexes $v$ and $v'$ with corresponding indexes $i$ and $i'$, then $v>v'$ iff $s^i>s^{i'}$ in lexicographical order. We use $\textsc{Compare\_strings}(s^{i},s^{i'},k)$ for comparing strings.
 Binary Heap $BH$ has three operations: 
 \begin{itemize}
    \item $\textsc{Get\_min\_and\_delete}(BH)$ returns minimal $s^i$ and removes it from the data structure.
    \item $\textsc{Add}(BH, s^i)$ adds vertex with value $i$ to the heap;
     \item $\textsc{Init}(BH)$ initializes an empty heap;
\end{itemize}
 
 The operations $\textsc{Get\_min\_and\_delete}$ and $\textsc{Add}$ invoke $\textsc{Compare\_strings}$ subroutine $\log_2 t$ times, where $t$ is the size of the heap. 
 
 The algorithm is the following.
 
 \begin{algorithm}\label{alg:sort}
\caption{The Quantum Algorithm for Sorting Problem.}
\begin{algorithmic}
\State $\textsc{Init}(BH)$\Comment{The initialization of the heap.}
\For{$i \in \{1,\dots,n\}$}
\State $\textsc{Add}(BH,s^i)$\Comment{Adding $s^i$ to the heap.}
\EndFor
\For{$i \in \{1,\dots,n\}$}
\State $ORDER\gets ORDER\cup\textsc{Get\_min\_and\_delete}(BH)$\Comment{Getting minimal string.}
\EndFor

\State \Return $ORDER$
\end{algorithmic}
\end{algorithm}
 
If we implement the sequence $s$ as an array, then we can store the heap in the same array. In this case, we do not need additional memory.

We have the following property of the algorithm that can be proven by the same way as Theorem \ref{th:qfreq-compl}.
\begin{theorem}\label{th:qsort-compl}
Algorithm \ref{alg:sort} sorts $s=(s^1,\dots,s^n)$ with query complexity $O(n(\log n)^2\cdot\sqrt{k})$ and error probability $O\left(\frac{1}{n}\right)$.
\end{theorem}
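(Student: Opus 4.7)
The plan is to mirror the structure of the proof of Theorem \ref{th:qfreq-compl}, since Heap sort is a standard algorithm whose correctness (given a correct comparator) and comparison-count bound are classical. First I would argue correctness: Heap sort returns elements in non-decreasing order with respect to the supplied comparator, and $\textsc{Compare\_strings}$ correctly realises lexicographical order whenever none of its invocations errs. Hence, conditioned on every comparator call returning the right value, the algorithm outputs a valid $ORDER$; the probability of failure is handled separately at the end.

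Next I would bound the query complexity. The building phase performs $n$ invocations of $\textsc{Add}$, and the extraction phase performs $n$ invocations of $\textsc{Get\_min\_and\_delete}$; each such heap operation uses at most $\log_2 n$ comparisons, since the heap never holds more than $n$ elements. Therefore the algorithm makes $O(n\log n)$ calls to $\textsc{Compare\_strings}$. By Lemma \ref{lm:strcmp}, each such call costs $O(\sqrt{k}\log n)$ queries, giving the claimed total of $O(n(\log n)^2\sqrt{k})$ queries.

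For the error bound I would reuse the union-bound style argument from the proof of Theorem \ref{th:qfreq-compl}. Each of the $O(n\log n)$ comparator calls fails with probability at most $1/n^3$, and the calls use independent random coins. Thus the probability that every call succeeds is at least $\left(1-\tfrac{1}{n^3}\right)^{\alpha\cdot n\log n}$ for some constant $\alpha$, and the same limit computation as before shows that the overall failure probability is $O(1/n)$.

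The only delicate point is justifying that the union bound actually applies: the behaviour of the heap operations is data-dependent, so one must note that each invocation of $\textsc{Compare\_strings}$ uses a fresh, independent instance of the algorithm of Lemma \ref{lm:first-one}, and therefore conditioning on the correctness of all previous invocations does not change the success probability of the next. Once this independence is made explicit, the analysis is a direct transcription of the argument used for Theorem \ref{th:qfreq-compl}.
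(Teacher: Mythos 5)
Your proposal is correct and follows exactly the route the paper intends: the paper gives no separate proof of Theorem \ref{th:qsort-compl}, stating only that it "can be proven by the same way as Theorem \ref{th:qfreq-compl}", and your argument is precisely that transcription (an $O(n\log n)$ comparison count from the $2n$ heap operations, each comparison costing $O(\sqrt{k}\log n)$ queries by Lemma \ref{lm:strcmp}, and the same error-probability limit computation). Your explicit remark about the independence of the comparator invocations is a small clarification the paper leaves implicit, but it does not change the approach.
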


The lower bound for deterministic complexity can be proven by the same way as in Theorem \ref{th:dfreq-compl}.
\begin{theorem}\label{th:dsort-compl}
Any deterministic algorithm for Sorting problem has $\Omega(nk)$ query complexity.
\end{theorem}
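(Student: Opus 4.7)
The plan is to imitate the adversary argument used for Theorem \ref{th:dfreq-compl}. I would suppose, for contradiction, that a deterministic algorithm $A$ solves the sorting problem using only $o(nk)$ queries, and then construct an input on which $A$ must return a non-sorted permutation.

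First, fix two distinct symbols $a,b\in\Sigma$ with $a<b$ and have the adversary answer $a$ to every query $A$ makes. When $A$ halts and returns an order $(i_1,\dots,i_n)$, the task is to find two strings whose relative order I can flip by later filling in positions that $A$ never queried.

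Second, a simple averaging step shows that the number of indices $i$ for which all $k$ positions of $s^i$ have been queried is at most $o(nk)/k=o(n)$, so at least $n-o(n)\geq 2$ strings still contain an unqueried position. Pick two such strings $s^p$ and $s^q$, and assume without loss of generality that $s^p$ appears before $s^q$ in the order output by $A$. Now complete the input as follows: choose one unqueried position $j^\ast$ of $s^p$, set $s^p_{j^\ast}=b$, and set every other unqueried entry of the input to $a$. The completion is consistent with every answer the adversary gave, so $A$ produces the same output on it. But then $s^q=a^k$ while $s^p$ equals $a^k$ except for a single $b$ at position $j^\ast$, so $s^p>s^q$ in lexicographical order; the correct sorted order must therefore place $s^q$ strictly before $s^p$, contradicting $A$'s output.

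The only real departure from the proof of Theorem \ref{th:dfreq-compl} is the selection of the two ambiguous strings; there, splitting the input into halves made it automatic, whereas here I rely on the averaging argument above. I expect this to be the main (but still very routine) step; verifying that the completed input is compatible with the adversary's answers and that it yields $s^p>s^q$ is immediate.
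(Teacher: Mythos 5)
Your proof is correct, and it is essentially the adversary argument the paper intends when it says the bound "can be proven by the same way as in Theorem~\ref{th:dfreq-compl}": answer queries with a fixed symbol, locate strings with unqueried positions, and flip one position to contradict the returned order. The only difference is cosmetic — you find the two ambiguous strings by an averaging count rather than by splitting the input into two halves with different letters, which is a clean and valid adaptation to the sorting setting.
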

The Radix sort \cite{cormen2001} algorithm almost reaches this bound and has $O((n+|\Sigma|)k)$ complexity. 
\section{Intersection of Two Sequences of Strings Problem}\label{sec:sets}

Let us consider the following problem.

{\bf Problem.}
For some positive integers $n,m$ and $k$, we have the sequence of strings $s=(s^1,\dots,s^n)$. Each $s^i=(s^i_1,\dots,s^i_k)\in \Sigma^k$ for some finite size alphabet $\Sigma$. Then, we get $m$ requests $t=(t^1\dots t^m)$, where $t^i=(t^i_1,\dots,t^i_k)\in \Sigma^k$. The answer to a request $t^i$ is $1$ iff there is $j\in\{1,\dots,n\}$ such that $t^i=s^j$. We should answer $0$ or $1$ to each of $m$ requests.

We have two algorithms. The first one is based on {\em ``Set of strings with quantum comparator''} data structure from Section \ref{sec:freq}. We store all strings from $s$ to a self-balancing binary search tree $BST$. Then, we answer each request using $\textsc{Find}(BST, s^i)$ operation. Let us present the Algorithm \ref{alg:qintersection-set}.

\begin{algorithm}
\caption{The Quantum Algorithm for Intersection of Two Sequences of Strings Problem using {\em ``Set of strings with quantum comparator''} .}\label{alg:qintersection-set}
\begin{algorithmic}
\State $\textsc{Init}(BST)$\Comment{The initialization of the tree.}

\For{$i \in \{1,\dots,n\}$}
\State $\textsc{Add}(BST, s^i)$\Comment{We add $s^i$ to the set.}
\EndFor

\For{$i \in \{1,\dots,m\}$}
\State $v \gets \textsc{Find}(BST, t^i)$\Comment{We search $t^i$ in the set.}
\If{$v=NULL$} 
\State \Return $0$
\EndIf
\If{$v\neq NULL$} 
\State \Return $1$
\EndIf
\EndFor
\end{algorithmic}
\end{algorithm}

The second algorithm is based on Sorting algorithm from Section \ref{sec:sort}. We sort strings from $s$. Then, we answer to each request using binary search in the sorted sequence of strings \cite{cormen2001} and $\textsc{Compare\_strings}$ subroutine for comparing strings during the binary search. Let us present the Algorithm \ref{alg:qintersection-binsearch}. Assume that the sorting Algorithm \ref{alg:sort} is the subroutine $\textsc{Sort\_strings}(s)$ and it returns the order $ORDER=(i_1,\dots,i_n)$. The binary search algorithm with $\textsc{Compare\_strings}$ subroutine as comparator is $\textsc{Binary\_search\_for\_strings}(t,s, OREDER)$ subroutine and it searches $t$ in the ordered sequence $(s^{i_1},\dots,s^{i_n})$. Suppose that the subroutine $\textsc{Binary\_search\_for\_strings}$ returns $1$ if it finds $t$ and $0$ otherwise.

\begin{algorithm}
\caption{The Quantum Algorithm for Intersection of Two Sequences of Strings Problem using sorting algorithm .}\label{alg:qintersection-binsearch}
\begin{algorithmic}
\State $ORDER\gets \textsc{Sort\_strings}(s)$\Comment{We sort $s=(s^1,\dots,s^n)$.}

\For{$i \in \{1,\dots,m\}$}
\State $ans \gets \textsc{Binary\_search\_for\_strings}(t,s, OREDER)$\Comment{We search $t^i$ in the ordered sequence.}
\State \Return $ans$
\EndFor
\end{algorithmic}
\end{algorithm}

The algorithms have the following query complexity.

\begin{theorem}
Algorithm \ref{alg:qintersection-set} and Algorithm \ref{alg:qintersection-binsearch} solve Intersection of Two Sequences of Strings Problem with query complexity $O((n+m)\sqrt{k}\cdot\log n\cdot\log(n+m))$ and error probability $O\left(\frac{1}{n+m}\right)$.
\end{theorem}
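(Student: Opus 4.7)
The plan is to mirror the proof of Theorem~\ref{th:qfreq-compl}, with the one twist that the string comparator needs to be boosted slightly harder. Specifically, I would rerun the inner search in Algorithm~\ref{alg:strcmp} $3\log_2(n+m)$ times instead of $3\log_2 n$ times; by the same argument as in Lemma~\ref{lm:strcmp}, this inflated comparator has query complexity $O(\sqrt{k}\log(n+m))$ and error probability $O(1/(n+m)^3)$. This boosting is necessary because the union bound below will range over $\Theta((n+m)\log n)$ comparator calls, not just $\Theta(n\log n)$ as in Theorem~\ref{th:qfreq-compl}.

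For Algorithm~\ref{alg:qintersection-set}, I would then count comparator calls directly. Each of the $n$ $\textsc{Add}$ operations on the self-balancing BST uses $O(\log n)$ comparisons, and each of the $m$ $\textsc{Find}$ operations also uses $O(\log n)$ comparisons, so the total number of comparator calls is $O((n+m)\log n)$. Multiplying by the per-call cost $O(\sqrt{k}\log(n+m))$ yields the claimed bound $O((n+m)\sqrt{k}\log n \log(n+m))$.

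For Algorithm~\ref{alg:qintersection-binsearch}, the same accounting works. Sorting $s$ with the heap-based procedure of Section~\ref{sec:sort} uses $O(n\log n)$ comparisons, and each of the $m$ subsequent binary searches uses $O(\log n)$ comparisons; again the total is $O((n+m)\log n)$ comparator calls and the overall query complexity matches.

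Finally, I would bound the error probability by a union bound. With the boosted comparator, each of the $O((n+m)\log n)$ comparisons fails with probability $O(1/(n+m)^3)$, so the probability that any comparison fails is $O\!\left(\log n / (n+m)^2\right) = O(1/(n+m))$. Conditional on every comparator returning the correct value, both algorithms reduce to the standard deterministic BST membership check and sort-plus-binary-search procedures, whose correctness is classical. I do not anticipate any deep obstacle here; the only step that requires care is matching the boosting parameter to the true number of comparator invocations, which is precisely the reason for replacing $\log n$ by $\log(n+m)$ inside Algorithm~\ref{alg:strcmp} for this application.
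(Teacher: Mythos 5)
Your proposal is correct and follows essentially the same route as the paper: boost the comparator to $3\log_2(n+m)$ repetitions so that each call costs $O(\sqrt{k}\log(n+m))$ queries with error $O(1/(n+m)^3)$, count $O((n+m)\log n)$ comparator invocations in each algorithm, and combine. The only cosmetic difference is that you finish with a direct union bound while the paper bounds $1-\bigl(1-\tfrac{1}{(n+m)^3}\bigr)^{\alpha(n+m)\log n}$ via a limit; these are equivalent.
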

\begin{proof}
The correctness of the algorithms follows from the description.
Let us discuss the query complexity of the first algorithm. As in the proof of Theorem \ref{th:qfreq-compl}, we can show that constructing of the search tree requires  $O(n\log n)$ comparing operations. Then, the searching of all strings $t^i$ requires $O(m\log n)$ comparing operations. The total number of comparing operations is $O((m+n)\log n)$. We will use little bit modified version of the Algorithm \ref{alg:strcmp} where we run it $3(\log (n + m))$ times. We can prove that comparing operation requires $O(\sqrt{k}\log (n+m))$ queries. The proof is similar to the proof of corresponding claim from the proof of Lemma \ref{lm:strcmp}. So, the total complexity is $O((n+m)\sqrt{k}\cdot\log n\cdot\log(n+m))$.

The second algorithm also has the same complexity because it uses  $O(n\log n)$ comparing operations for sorting and  $O(m\log n)$ comparing operations for all invocations of the binary search algorithm.

Let us discuss the error probability. Events of error in the algorithm are independent. So, all events should be correct. We can prove that the error probability for comparing operation is $O(1/(n+m)^3)$. The proof is like the proof of Lemma \ref{lm:strcmp}. So, the probability of correctness of one event is $1-\left(1-\frac{1}{(n+m)^3}\right)$. Hence, the probability of correctness of all $O((n+m)\log n)$ events is at least $ 1-\left(1-\frac{1}{(n+m)^3}\right)^{\alpha\cdot(n+m)\log n}$ for some constant $\alpha$.

Note that 
\[ \lim\limits_{n\to \infty} \frac{1-\left(1-\frac{1}{(n+m)^3}\right)^{\alpha\cdot (n+m)\log n}}{1/(n+m)}<1;\]
 Hence, the total error probability is at most $O\left(\frac{1}{n+m}\right)$.

\Endproof
\end{proof}

Note that Algorithm \ref{alg:qintersection-binsearch} has a better big-$O$ hidden constant than Algorithm \ref{alg:qintersection-set}, because the Red-Black tree or AVL tree has a height that greats $\log_2 n$ constant times. So, adding elements to the tree and checking existence has bigger big-$O$ hidden constant than sorting and binary search algorithms.

The lower bound for deterministic complexity can be proven by the same way as in Theorem \ref{th:dfreq-compl}.
\begin{theorem}
Any deterministic algorithm for Intersection of Two Sequences of Strings Problem has $\Omega((n+m)k)$ query complexity.
\end{theorem}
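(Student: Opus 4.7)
The plan is to reuse the adversary methodology of Theorem~\ref{th:dfreq-compl}. Suppose, toward a contradiction, that a deterministic algorithm $A$ decides the intersection problem correctly while making only $o((n+m)k)$ queries. I would pick two distinct symbols $a,b\in\Sigma$ and define an adaptive adversary that answers $a$ to every query on the \emph{first halves} $\{s^j:j\le n/2\}\cup\{t^i:i\le m/2\}$ and $b$ on the \emph{second halves}. The default input consistent with these replies sets the first-half strings to $a^k$ and the second-half strings to $b^k$ in both $s$ and $t$, so every $t^i$ lies in $s$ and, in order to be correct on this input, $A$ must output $1$ on every request.

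Next I would exploit the $o((n+m)k)$ query budget to produce a second input consistent with $A$'s queries but with a different correct output. The simplest case is an unqueried coordinate $p$ in some $t^{i^*}$: the adversary flips $t^{i^*}_p$ to the letter opposite to its default value, so $t^{i^*}$ now contains both $a$ and $b$ and therefore differs from every $s^j\in\{a^k,b^k\}$. The correct answer at index $i^*$ becomes $0$, while $A$'s replies are unchanged so it still outputs $1$ and errs. A symmetric argument handles an unqueried position in the second half of $t$.

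The main obstacle is the case where $A$ has resolved $t$ completely and the remaining slack lies with $s$, because modifying a single $s^{j^*}$ does not flip any answer (other first-half copies of $a^k$ still realize the match). My plan is to observe that, under $q<(n+m)k$, either every first-half $s^j$ still has an unqueried coordinate or every second-half $s^j$ does; in the former case the adversary simultaneously plants a $b$ in each first-half $s^j$, eliminating $a^k$ from $s$ entirely and flipping every first-half $t^i$'s correct answer from $1$ to $0$, and the latter case is symmetric via $b^k$. Combined, the adversary can always force $A$ to err once $q$ falls short of $(n+m)k$ up to constants, yielding the $\Omega((n+m)k)$ lower bound, mirroring the structure of Theorem~\ref{th:dfreq-compl}.
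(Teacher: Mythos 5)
Your Case 1 (an unqueried coordinate somewhere in $t$) is sound and yields the $\Omega(mk)$ half of the bound, but your Case 2 does not work, so the $\Omega(nk)$ half is not established. First, the dichotomy you invoke --- under $q<(n+m)k$, either every first-half $s^j$ has an unqueried coordinate or every second-half $s^j$ does --- simply does not follow from the query bound: an algorithm can fully query the single string $s^1$ (cost $k$) and the single string $s^{n/2+1}$ (cost $k$) while leaving the rest of $s$ untouched, and then neither half consists entirely of partially-queried strings. Second, and more damagingly, against your adversary this cheap strategy actually wins: since the adversary answers $a$ on every first-half query and $b$ on every second-half query, after fully reading $s^1$, $s^{n/2+1}$ and all of $t$ the algorithm holds a genuine certificate that $a^k$ and $b^k$ both occur in $s$ and that every $t^i$ equals one of them, so it can correctly answer $1$ on every request using only $mk+2k$ queries. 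Your adversary therefore forces only $\Omega(mk+k)$ queries, not $\Omega((n+m)k)$. The root cause is exactly the duplicate-string obstacle you yourself flagged: an adversary that commits each $s^j$ to a fixed constant string the moment it is first probed hands the algorithm a cheap membership witness.

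The repair is to make the adversary non-committal on $s$: for example, answer $a$ to every query into $t$ and to the first $k-1$ distinct positions queried inside any particular $s^j$, and answer $b$ only when the $k$-th position of that $s^j$ is queried. Then every fully-queried $s^j$ is known to differ from $a^k$, while every partially-queried $s^j$ can still be completed either to $a^k$ (making the correct answer for a fully-queried $t^i=a^k$ equal to $1$) or to a string containing a $b$ (making the correct answer $0$ once all of $s$ is so completed); hence all $nk$ positions of $s$ must be read, and combining this with your Case 1 gives $\Omega(nk+mk)=\Omega((n+m)k)$. Note that the paper only asserts the proof goes ``the same way as in Theorem \ref{th:dfreq-compl}''; the adaptation genuinely requires this extra twist, because in Theorem \ref{th:dfreq-compl} a single flipped coordinate changes the occurrence counts, whereas here a single flip in $s$ cannot change any membership answer while other copies of $a^k$ survive in $s$.
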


This complexity can be reached if we implement the set of strings $s$ using Trie (prefix tree) \cite{d59,b98,b2008,knuth73}. 

Note, that we can use the quantum algorithm for element distinctness \cite{a2007elementDist}, \cite{a2004} for this problem. The algorithm solves a problem of finding two identical elements in the sequence. The query complexity of the algorithm is $O(D^{2/3})$, where $D$ is a number of elements in the sequence. The complexity is tight because of \cite{as2004}. The algorithm can be the following. On $j$-th request, we can add the string $t^j$ to the sequence $s^1,\dots,s^n$ and invoke the element distinctness algorithm that finds a collision of $t^j$ with other strings. Such approach requires $\Omega(n^{2/3})$ query for each request and $\Omega(mn^{2/3})$ for processing all requests. Note, that the streaming nature of requests does not allow us to access to all $t^1,\dots,t^m$ by Oracle. So, each request should be processed separately.
\section{Conclusion}\label{sec:concl}
In the paper we propose a quantum algorithm for comparing strings. Using this algorithm we discussed four data structures:   {\em ``Multi-set of strings with quantum comparator''}, {\em ``Set of strings with quantum comparator''}, {\em ``Map with a string key and quantum comparator''} and {\em ``Binary Heap of strings with quantum comparator''}. We show that the first two data structures work faster than the implementation of similar data structures using Trie (prefix tree) in a case of $\log_2 n = o(k^{0.25})$. The trie implementation is the best known classical implementation in terms of complexity of simple operations (add, delete or find). Additionally, we constructed a quantum strings sort algorithm that works faster than the radix sort algorithm that is the best known deterministic algorithm for sorting a sequence of strings.

Using these two groups of results, we propose quantum algorithms for two problems: the Most Frequently String Search and Intersection of Two String Sets.  These quantum algorithms are more efficient than deterministic ones.

\subsection*{Acknowledgement}
This work was supported by Russian Science Foundation Grant 19-71-00149.
We thank Aliya Khadieva, Farid Ablayev and Kazan Federal University quantum group for useful discussions.
\bibliographystyle{alpha}
\bibliography{tcs}
\newpage

\end{document}